\newtheorem{theorem}{Theorem}[section]
\newtheorem{lemma}[theorem]{Lemma}
\newtheorem{proposition}[theorem]{Proposition}
\newcommand{\sinc}{\textnormal{sinc}}
\newcommand{\supp}{\textnormal{supp}\,}
\begin{document}

\begin{frontmatter}

\title{Average sampling of band-limited stochastic processes}

\author{Gilles Fa\"y}
\ead{gilles.fay@ecp.fr}

\author{Sinuk Kang\corref{cor}\fnref{tel}}
\ead{sinuk.kang@ecp.fr}

\address{Laboratory of Mathematics Applied to Systems, Ecole Centrale Paris, Grande voie de vignes, 92290 Cht\^{a}tenay Malabry, France}

\cortext[cor]{Corresponding author}
\fntext[tel]{Telephone: +33 1 41 13 17 89, Fax: +33 1 41 13 17 35}

\begin{abstract}
We consider the problem of reconstructing a wide sense stationary band-limited process from its local averages taken either at the Nyquist rate or above. As a result, we obtain a sufficient condition under which average sampling expansions hold in mean square and for almost all sample functions. 
Truncation and aliasing errors of the expansion are also discussed.
\end{abstract}

\begin{keyword}
average sampling \sep wide sense stationary stochastic process \sep sampling theorem
\MSC[2010] 42C15 \sep 94A12
\end{keyword}

\end{frontmatter}

\section{Introdution}

The classical Whittaker-Shannon-Kotel'nikov (WSK) sampling theorem \cite{Shannon:1949uo,Whittaker:1915wb,Kotelnikov:1933vx} says that a signal $f(t)$ in $PW_{\pi\omega}$, the Paley-Wiener space of functions band-limited to $[-\pi\omega,\pi\omega]$ (to be specified in Section \ref{notationdefinition}), is uniquely determined by its discrete samples $f({n}/{\omega})$'s, $n\in\mathbb Z$, and can be reconstructed via 
\begin{equation*}
	f(t) = \sum_{n \in \mathbb Z } f(\frac{n}{\omega}) \sinc (\omega t -n)
\end{equation*} which converges in $L^2(\mathbb R)$ and uniformly and absolutely on $\mathbb R$, where $\sinc t := {\sin \pi t}/({\pi t})$.

It is well known that the WSK sampling theorem has its counterpart for stochastic processes. The counterpart for wide sense stationary band-limited processes, definition of which is to be specified in Section \ref{notationdefinition}, is first introduced by Balakrishnan \cite{Balakrishnan:1957gx}, and developed further by many more authors, among them Lloyd \cite{Lloyd:1959vh} and Beutler \cite{Beutler:1961wk}.
Note that the sampling expansion in \cite{Lloyd:1959vh,Beutler:1961wk} is shown to converge both in mean square and with probability 1, while the expansion in \cite{Balakrishnan:1957gx} is shown to converge in mean square only. There also has been an effort to extend the class of stochastic processes to which the WSK sampling theorem applies. Zakai \cite{ZAKAI:1965ha} extended the notion of band-limited stochastic processes and proved the WSK sampling theorem hold over this extended notion. Using Zakai's technique, Lee \cite{Lee:1976uw} extended it to the class of second order measurable mean square continuous processes whose covariances are polynomially bounded.
Belyaev \cite{Belyaev:1959iv} and Piranashvili \cite{Piranashvili:1967bi} defined classes of the analytic processes, for which almost all trajectories can be analytically continued. Then they derived the WSK sampling expansion which holds either with probability 1 or for almost all sample functions in their classes.

Recently, an average sampling expansions (ASE's) for stochastic processes have been investigated in \cite{Song:2007fu,He:2011ep,Olenko:2011db}. Since acquisition devices do not produce signal values at the exact instances, in practice it seems more reasonable to use local averages instead of point evaluations. We call this sampling procedure the average sampling.
The ASE on band-limited functions was first presented by Gr\"{o}chenig \cite{Grochenig:1992ht}, and then {extended} in \cite{Aldroubi:2002cz,Feichtinger:1994td,Sun:2002eq}. 
In \cite{Grochenig:1992ht}, the author treated the problem of reconstructing a band-limited function from its local averages $\langle f , u_n \rangle = \int f(t)u_n(t) dt$ around $t_n$, where $\{u_n(t) : n \in \mathbb Z\}$ is a sequence of average functions satisfying 
\begin{equation}\label{v1.0condaverage}
	\supp  u_n(t) \subset [t_n - \frac{\delta}{2}, t_n + \frac{\delta}{2}],~ 0 \leq u_n(t),~\textnormal{and } \int_{-\infty}^{\infty} u_n (t)dt = 1.	
\end{equation} 
More precisely, he proved that for any $f(t)$ in $PW_{\pi\omega}$, if $0< t_{n+1} - t_n \leq \delta < {1}/({\sqrt{2}\pi\omega})$ then $f(t)$ is uniquely determined by the local averages $\langle f , u_n \rangle$'s, $n \in \mathbb Z$, and can be reconstructed by some iteration scheme. 
Applying the iteration scheme with $\{ \langle f , u_n \rangle: n \in \mathbb Z\}$ is equivalent to finding a frame expansion of $f$ on $PW_{\pi\omega}$, coefficients of which correspond to $\{ \langle f , u_n \rangle: n \in \mathbb Z\}$. That is, if the aforementioned condition is satisfied then there exists a frame $\{ r_n (t) : n \in \mathbb Z\}$ of $PW_{\pi\omega}$ such that
\begin{equation}\label{v1.0eq0.1}
	f(t) = \sum_{n \in \mathbb Z} \langle f , u_n \rangle r_n (t), ~ f \in PW_{\pi\omega}
\end{equation}
which converges in $L^2 (\mathbb R)$ and pointwise on $\mathbb R$. {Note here that $\{ \mathcal{P}u_n : n \in \mathbb Z\}$ and $\{r_n : n \in \mathbb Z\}$ are dual frame pairs of $PW_{\pi\omega}$ where $\mathcal{P}$ denotes the orthogonal projection of $L^2(\mathbb R)$ onto $PW_{\pi\omega}$.}

Using Gr\"{o}chenig's result, Song et al. \cite{Song:2007fu} addressed an ASE for band-limited stochastic processes. To be precise, let $X(t)$, $-\infty < t < \infty$, be a wide sense stationary stochastic process band-limited to $[-\pi\omega, \pi \omega]$ and $R_X(t)$ be its autocovariance function. Under the same notation and the assumption as in Gr\"{o}chenig's, they proved that if 
\begin{equation}\label{v2.3eq2}
	\{ t_n : n \in \mathbb Z\} \textnormal{ {is} relatively separable,}
\end{equation}	
 i.e., there is some constant $N$ such that $[k,k+1] \cap \{ t_n : n \in \mathbb Z\}$ contains at most $N$ elements for all $k \in \mathbb Z$, and  
\begin{equation} \label{v1.0eq1.2}
	|R_X (t)| \leq R_X(0) (1+|t|)^{-\eta}~ \textnormal{ for some } \eta >1,
\end{equation}
 then 
\begin{equation} \label{v2.1averageexp}
	X(t) = \sum_{n \in \mathbb Z} \langle X, u_n \rangle r_n (t)
\end{equation} which converges in mean square, 
i.e., 
\begin{equation*}
	\lim_{N \rightarrow \infty} E \Big| X(t) - \sum_{n =-N}^{N} \langle X,u_n\rangle r_n (t) \Big|^2 = 0,
\end{equation*}
for any $t \in \mathbb R$. It is assumed here that $\{r_n (t) : n \in \mathbb Z\}$ is a frame of $PW_{\pi\omega}$ for which \eqref{v1.0eq0.1} holds. 
Later, He et al. \cite{He:2011ep} provided an ASE to approximate wide sense stationary band-limited processes and, more generally, Olenko et al. \cite{Olenko:2011db} presented an ASE to approximate Piranashvili's processes. But their ASE's are asymptotically equivalent to the WSK sampling expansion, i.e., a point sampling.

It should be noted that the condition \eqref{v1.0eq1.2} seems too strong. For instance, consider class of linear combination{s} of $\{ \sinc (\omega t-n) : n \in \mathbb Z\}$, denoted by $PW_{\pi\omega}^o$. Then $PW_{\pi\omega}^o$ is dense in $PW_{\pi\omega}$, i.e., any function in $PW_{\pi\omega}$ can be approximated {in $L^2(\mathbb R)$} by a limit of a sequence of functions in $PW_{\pi\omega}^o$, and for any $f(t) \in PW_{\pi\omega}^o$, $|f(t)|(1+|t|)^{\eta} \rightarrow \infty$ as $|t| \rightarrow \infty$ whenever $\eta >1$. {In fact, one can prove Theorem 2.2 of \cite{Song:2007fu} without assuming the condition \eqref{v2.3eq2} and \eqref{v1.0eq1.2}, based on the observation that \eqref{v1.0eq0.1} converges unconditionally on $\mathbb R$ (see Theorem \ref{v1.0thm2.2}).}

In this paper, we show that the average sampling theorem given in \cite{Song:2007fu} remains true without the condition \eqref{v2.3eq2} and \eqref{v1.0eq1.2}.
We also provide an average sampling theorem with local averages taken at the Nyquist rate, while only oversampled local averages were considered in \cite{Song:2007fu}. Success of a perfect reconstruction from local averages via the resulting expansion depends only on the length of support of average functions $u_n$, which improves the results of \cite{Song:2007fu,He:2011ep}. 
The latter ASE of ours not only converges in mean square but also converges for almost all sample functions. 
Under band-guard condition (see e.g. \cite{Helms:1962wo}) we derive explicit upper bounds on the truncation error of the ASE. Aliasing error is also discussed.

This paper is organized as follows. In Section \ref{notationdefinition} we introduce notations and definitions needed throughout the paper. In Section \ref{secASO} we present an average sampling theorem in which local averages are taken above the Nyquist rate (oversampling). In Section \ref{Nyquist} we show that band-limited stochastic processes can also be reconstructed by its local averages taken at the Nyquist rate. It is shown that the resulting ASE converges both in mean square and for almost all sample functions. Finally in Section \ref{error} truncation and aliasing errors of the expansion are discussed.

\section{Notations and definitions}\label{notationdefinition}

The Paley-Wiener space of signals band-limited to $[-\pi\omega, \pi\omega]$ is defined by
\begin{equation*}
	PW_{\pi\omega} := \{ f \in L^2(\mathbb R) \cap C(\mathbb R) : \textnormal{supp} \hat{f} \subset [-\pi\omega, \pi\omega] \}
\end{equation*}
where we define the Fourier transform as $\mathcal{F}[f](\xi) = \hat{f}(\xi) := \int_{-\infty}^{\infty} f(t) e^{-it\xi} dt$, $f \in L^1(\mathbb R)$, and extend it to an isomorphism from $L^2(\mathbb R)$ to $L^2(\mathbb R)$.

A sequence $\{\phi_n:n\in \mathbb{Z}\}$ of vectors in a separable Hilbert space $\mathcal{H}$ equipped with the norm $\| \cdot \|_\mathcal{H}$ is 
\begin{itemize}
\item a frame of $\mathcal{H}$ with bounds $(A,B)$ if there are constants $B \geq A >0$ such that
$$
A\Vert f \Vert^{2}_\mathcal{H}\leq \sum_{n\in \mathbb{Z}}|\langle f ,\phi _{n}\rangle_\mathcal{H}|^{2}\leq B\Vert f\Vert^{2}_\mathcal{H},~f \in \mathcal{H};
$$
\item a Riesz (or stable) basis of $\mathcal{H}$ with bounds $(A,B)$ if $\{\phi_n:n\in \mathbb{Z}\}$ is complete in $\mathcal H$ and there are constants $B \geq A >0$ such that
$$
A\|\mathbf{c}\|^{2}\leq  \Big\| \sum_{n\in \mathbb{Z}}c(n)\phi _{n}\Big\|_\mathcal{H}^{2}\leq B\| \mathbf{c} \|^{2},~\mathbf c := \{ c(n)\}_n \in \ell^2 (\mathbb Z)
$$ where $\|\mathbf{c}\|^{2} = \sum_{n \in \mathbb Z} |c(n)|^2$.
\end{itemize}

A stochastic process $\{ X(t) : t \in \mathbb R \}$ is wide sense stationary if $E(X(t)) = 0$ and $E|X(t)|^2 < \infty$ for $t \in \mathbb R$ and the autocovariance function $R_X (t,s) :=E(X(t) \overline{X(s)})$ depends only on the difference $t-s$. By the spectral representation theorem \cite{Doob:1990us}, a wide sense stationary process $\{ X(t) : t \in \mathbb R \}$ has a spectral representation: $X(t) = \int_{-\infty}^{\infty} e^{it\lambda} dy(\lambda)$, $ t \in \mathbb R$, where the process $y$ has orthogonal increments and $F$ is the spectral distribution function of $\{ X(t) : t \in \mathbb R \}$ such that $E|dy(\lambda)|^2 = dF(\lambda)$. 

$\{ X(t) : t \in \mathbb R \}$ is said to be band-limited to $[-\pi\omega, \pi\omega]$ if its spectrum {has the support} $[-\pi\omega, \pi\omega]$, i.e., $X(t) = \int_{-\pi\omega}^{\pi\omega}e^{it\lambda} dy(\lambda)$, $t \in \mathbb R$. {Since $X(t) = \int_{-\pi\omega}^{\pi\omega}e^{it\lambda} dy(\lambda)$ if and only if $R_X (t,s) =  \int_{-\pi\omega}^{\pi\omega}e^{i(t-s)\lambda} dF(\lambda) $ \cite{Cramer:1940vr}, $\{ X(t) : t \in \mathbb R \}$ is also said to be band-limited to $[-\pi\omega,\pi\omega]$ if $R_X(t)$ is band-limited to $[-\pi\omega,\pi\omega]$, i.e., $R_X (t) \in PW_{\pi\omega}$.}

We denote by $\langle \cdot, \cdot \rangle$ the usual inner product in $L^2(\mathbb R)$ unless otherwise specified.

\section{Oversampled local averages} \label{secASO}

We extend Theorem 2.2 of \cite{Song:2007fu} into the following theorem by removing the aforementioned condition \eqref{v1.0eq1.2} and the constraint for $\{ t_n : n \in \mathbb Z\}$ being relatively separable.
\begin{theorem} \label{v1.0thm2.2}
Let $\{u_n (t) : n \in \mathbb R\}$ be a sequence of average functions satisfying \eqref{v1.0condaverage}.
For a wide sense stationary process $X(t)$ band-limited to $[-\pi\omega,\pi\omega]$, if $t_{n+1} - t_n \leq \delta < {1}/({\sqrt{2}\pi\omega})$ then
\begin{equation} \label{v1.0thm2.1eq4}
		X(t)  =   \sum_{n \in \mathbb Z} \langle X,u_n \rangle r_n (t)
\end{equation}
which converges in mean square, 
{uniformly on any compact subset of $\mathbb R$,} where $\{ r_n (t): n \in \mathbb Z\}$ is a frame of $PW_{\pi\omega}$ for which \eqref{v1.0eq0.1} holds.
\end{theorem}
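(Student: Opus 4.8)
\emph{Proof idea.} The plan is to carry everything over to the spectral domain, where the assertion becomes a convergence question about the Fourier transform of the deterministic identity \eqref{v1.0eq0.1}. Let $\mathcal{H}_X$ be the closed linear span of $\{X(t):t\in\mathbb{R}\}$ in the $L^2$-space of the underlying probability space, and let $k_t$ be the reproducing kernel of $PW_{\pi\omega}$ at $t$, so $k_t(s)=\omega\,\sinc(\omega(t-s))$, $g(t)=\langle g,k_t\rangle$ for $g\in PW_{\pi\omega}$, and $\|g\|_\infty\le\sqrt{\omega}\,\|g\|$. A band-limited wide sense stationary process is mean square continuous, so $\langle X,u_n\rangle:=\int X(s)u_n(s)\,ds$ exists as a mean square integral and lies in $\mathcal{H}_X$; writing $X(t)=\int_{-\pi\omega}^{\pi\omega}e^{it\lambda}\,dy(\lambda)$ and invoking the stochastic Fubini theorem, the Kolmogorov isometry $\mathcal{H}_X\to L^2([-\pi\omega,\pi\omega],dF)$ sending $X(t)\mapsto e^{it\cdot}$ carries $\langle X,u_n\rangle$ to $v_n:=\overline{\widehat{u_n}}$, with $|v_n|\le1$ everywhere since $u_n\ge0$ and $\int u_n=1$. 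Putting $S_N(t):=\sum_{|n|\le N}\langle X,u_n\rangle r_n(t)$, we have
\[
 E\bigl|X(t)-S_N(t)\bigr|^2=R_X(0)-2\,\Re E\bigl(S_N(t)\overline{X(t)}\bigr)+E\bigl|S_N(t)\bigr|^2,
\]
so it suffices to show $E\bigl(S_N(t)\overline{X(t)}\bigr)\to R_X(0)$ and $E|S_N(t)|^2\to R_X(0)$. (Proving the former with $X(t)$ replaced by $X(\tau)$, any $\tau$, also gives $S_N(t)\rightharpoonup X(t)$ weakly in $\mathcal{H}_X$, $\{X(\tau)\}$ being total there.)

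The first limit follows directly from \eqref{v1.0eq0.1}. Since $R_X\in PW_{\pi\omega}$, the translate $g_\tau:=R_X(\cdot-\tau)$ also belongs to $PW_{\pi\omega}$, and
\[
 E\bigl(S_N(t)\overline{X(\tau)}\bigr)=\sum_{|n|\le N}r_n(t)\int R_X(s-\tau)u_n(s)\,ds=\sum_{|n|\le N}\langle g_\tau,u_n\rangle\,r_n(t).
\]
Applying \eqref{v1.0eq0.1} to $g_\tau$, and using $\|h\|_\infty\le\sqrt{\omega}\,\|h\|$ to upgrade the unconditional $L^2(\mathbb{R})$-convergence of that expansion to unconditional — in particular symmetric — pointwise convergence on $\mathbb{R}$, the last sum tends to $g_\tau(t)=R_X(t-\tau)=E\bigl(X(t)\overline{X(\tau)}\bigr)$. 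Because $\tau\mapsto g_\tau$ is continuous from $\mathbb{R}$ into $PW_{\pi\omega}$, the set $\{g_\tau:\tau\in K\}$ is compact, and frame expansions converge uniformly on compacta, this holds uniformly for $t$ in a compact $K$.

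For the second limit use the stochastic Fubini theorem again to write $S_N(t)=\int X(s)\overline{\Theta_N(s)}\,ds$ with $\Theta_N:=\sum_{|n|\le N}\overline{r_n(t)}\,u_n\in L^1(\mathbb{R})$, so that
\[
 E\bigl|S_N(t)\bigr|^2=\iint R_X(s-s')\,\overline{\Theta_N(s)}\Theta_N(s')\,ds\,ds'=\bigl\langle R_X*(\mathcal{P}\Theta_N),\,\mathcal{P}\Theta_N\bigr\rangle,
\]
where $R_X*(\mathcal{P}\Theta_N)=R_X*\Theta_N\in PW_{\pi\omega}$ (as $R_X$ is band-limited) and $\mathcal{P}\Theta_N=\sum_{|n|\le N}\overline{r_n(t)}\,\mathcal{P}u_n$ is the $N$-th symmetric partial sum of the adjoint of \eqref{v1.0eq0.1}, namely $\sum_{n}\overline{r_n(t)}\,\mathcal{P}u_n=k_t$, again unconditionally convergent in $L^2(\mathbb{R})$. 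Since $\langle R_X*k_t,k_t\rangle=(R_X*k_t)(t)=(\mathcal{P}R_X)(0)=R_X(0)$ (the last step because $R_X\in PW_{\pi\omega}$), the second limit would follow by letting $N\to\infty$ inside the sesquilinear form $(h,h')\mapsto\langle R_X*h,h'\rangle=\tfrac{1}{2\pi}\int\widehat{R_X}\,\widehat{h}\,\overline{\widehat{h'}}$.

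Passing to this limit is the one delicate point — and the reason \eqref{v1.0eq1.2} and the relative separability \eqref{v2.3eq2} are invoked in \cite{Song:2007fu}. The form is $L^2(\mathbb{R})$-continuous, so that $\mathcal{P}\Theta_N\to k_t$ in $L^2$ closes the argument at once, precisely when the spectral density $\widehat{R_X}/(2\pi)$ is bounded, i.e. $dF\le C\,d\lambda$; but $R_X\in PW_{\pi\omega}$ only forces $\widehat{R_X}\in L^2$. Equivalently, one must transfer the $L^2([-\pi\omega,\pi\omega],d\lambda)$-convergence of the partial sums $\sum_{|n|\le N}r_n(t)v_n\to e^{it\cdot}$ (the Fourier image of the adjoint of \eqref{v1.0eq0.1}) to the weighted space $L^2([-\pi\omega,\pi\omega],dF)$. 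For bounded density this is automatic. In general I would decompose $X=\sum_j X^{(j)}$ into the mutually orthogonal components carried by the level sets $\{2^{j-1}<\widehat{R_X}\le2^{j}\}$ of its density — each $X^{(j)}$ has bounded density, hence is covered by the first case — and control the tail using orthogonality of the $X^{(j)}(t)-S_N(t)[X^{(j)}]$, the bound $\int_{\{\widehat{R_X}>2^J\}}\widehat{R_X}\to0$, the estimate $|v_n|\le1$, and a uniform bound $\sup_N E|S_N(t)|^2<\infty$, which makes the family $\{|\sum_{|n|\le N}r_n(t)v_n|^2\}_N$ uniformly $dF$-integrable. Establishing that uniform second-moment bound is the real obstacle: it should follow from an $L^p$-boundedness ($p>2$) of the Gr\"ochenig reconstruction partial sums — plausible since the frame $\{\mathcal{P}u_n\}$ is spatially localized and the $t_n$ are ordered — or be read off directly from the iteration producing $\{r_n\}$. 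Uniformity in $t$ on a compact persists throughout, the only $t$-dependent objects being the $k_t$, which form a compact family in $PW_{\pi\omega}$, and the estimate $\|h\|_\infty\le\sqrt{\omega}\,\|h\|$ being $t$-free.
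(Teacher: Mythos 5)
Your treatment of the cross term $E\bigl(S_N(t)\overline{X(\tau)}\bigr)$ is exactly the paper's: reduce to the frame expansion \eqref{v1.0eq0.1} of the translate $R_X(\cdot-\tau)\in PW_{\pi\omega}$ and use the reproducing-kernel bound to upgrade unconditional $L^2(\mathbb R)$-convergence to unconditional, locally uniform pointwise convergence. The genuine gap is in the quadratic term, and you have named it yourself: your spectral route reduces $E|S_N(t)|^2\to R_X(0)$ to transferring the $L^2(d\lambda)$-convergence of $\sum_{|n|\le N}r_n(t)v_n$ to $e^{it\cdot}$ into $L^2(\widehat{R_X}\,d\lambda)$, which is automatic only if the spectral density is bounded, whereas $R_X\in PW_{\pi\omega}$ gives $\widehat{R_X}\in L^1\cap L^2$ and nothing more; and your proposed repair (dyadic decomposition of the density plus uniform integrability) hinges on the uniform bound $\sup_N E|S_N(t)|^2<\infty$, which you leave unproven and which is not obviously extractable from Gr\"ochenig's iteration. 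As written, the argument is therefore incomplete at precisely the step where the theorem has to improve on \cite{Song:2007fu}.

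The paper closes this step without leaving the time domain, using only the tool you already deployed for the cross term. Write
\begin{equation*}
E|S_N(t)|^2=\sum_{k,n=-N}^{N}\Bigl(\iint R_X(x-y)\,u_n(x)u_k(y)\,dx\,dy\Bigr)\,r_k(t)\overline{r_n(t)},
\end{equation*}
a square partial sum of a double \emph{scalar} series. For each fixed $x$ the sum over $k$ is the frame expansion \eqref{v1.0eq0.1} of $R_X(x-\cdot)\in PW_{\pi\omega}$ evaluated at $t$, hence converges unconditionally (and uniformly in $x$) to $R_X(x-t)$; integrating against $u_n(x)\,dx$ and summing over $n$ against $\overline{r_n(t)}$ is then the conjugate of the frame expansion of $R_X(t-\cdot)$ at $t$, which converges unconditionally to $\overline{R_X(0)}=R_X(0)$. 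Since for scalar series unconditional convergence is equivalent to absolute convergence, the double series converges absolutely and its square partial sums have the same limit as the iterated sums, namely $R_X(0)$. No boundedness of the spectral density, no uniform second-moment bound, and no decomposition of $X$ are needed: the whole point of dropping \eqref{v2.3eq2} and \eqref{v1.0eq1.2} is that the unconditional pointwise convergence of \eqref{v1.0eq0.1} alone carries the quadratic term. Replace your spectral argument for $E|S_N(t)|^2$ by this iterated-sum computation and the proof is complete.
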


\begin{proof}
Since $E|X(t)|^2 <\infty$ for $t \in \mathbb R$, we have by Fubini's theorem that for any $t \in \mathbb R$ and any positive integer $N$, 
	\begin{eqnarray} \label{v1.0eq2}
		&& E \Big| X(t) - \sum_{n =-N}^{N} \langle X,u_n\rangle r_n (t) \Big|^2 \nonumber \\ 
		&& = R_X (0) -  \sum_{n =-N}^{N} \langle R_X(t-\cdot) ,u_n(\cdot) \rangle r_n (t) 
		  -  \sum_{n =-N}^{N} \overline{ \langle R_X(t-\cdot) ,u_n(\cdot) \rangle r_n (t)} \nonumber \\
		&& +  \sum_{k,n =-N}^{N}  \int_{t_k-\frac{\delta}{2}}^{t_k+\frac{\delta}{2}} \int_{t_n-\frac{\delta}{2}}^{t_n+\frac{\delta}{2}} {R_X(x-y)}u_n (x) u_k (y) dx dy \,r_k(t) \overline{r_n (t)}.
	\end{eqnarray}
We first show that the double summation \eqref{v1.0eq2} converges unconditionally for any $t \in \mathbb R$ as $N$ goes to infinity.
Since $R_X(t'-\cdot) \in PW_{\pi\omega}$ for $t' \in \mathbb R$, we obtain from \eqref{v1.0eq0.1} that {for a given $t' \in \mathbb R$}
\begin{equation} \label{v1.0eq5.1}
	R_X(t'-t) = \sum_{k \in \mathbb Z} \int_{t_k-\frac{\delta}{2}}^{t_k+\frac{\delta}{2}} R_X(t'-y)u_k(y)dy \,r_k (t).
\end{equation} 
It should be noticed that \eqref{v1.0eq5.1} converges unconditionally {for each $t\in\mathbb R$ and uniformly on $\mathbb R$} since \eqref{v1.0eq0.1} is a frame expansion in $PW_{\pi\omega}$ so that it converges unconditionally both in $L^2 (\mathbb R)$ and pointwise on $\mathbb R$ {and moreover the pointwise convergence is uniform}{: this follows by Corollary 3.1.5 of \cite{Christensen:2008wy} together with $PW_{\pi\omega}$ being a reproducing kernel Hilbert space {with bounded reproducing kernel} \cite{Higgins:1996uoa}.}
Setting $t = t'$, we have 
\begin{equation}\label{v3.1sec3eq1}
	R_X(0) = \sum_{k \in \mathbb Z} \int_{t_k-\frac{\delta}{2}}^{t_k+\frac{\delta}{2}} R_X(t-y)u_k(y)dy \,r_k (t)
\end{equation}
which converges unconditionally for any $t\in \mathbb R$. {Furthermore, \eqref{v3.1sec3eq1} converges uniformly on any compact subset of $\mathbb R$ (Theorem 7.13 of \cite{Rudin:1964wf}).}

Now, for a given $x \in \mathbb R$, let $a_{k,x}(t) :=  \int_{t_k-{\delta}/{2}}^{t_k+{\delta}/{2}}{R_X(x-y)} u_k (y) dy \,r_k (t)$, $t \in \mathbb R$. Then $\sum_{k \in \mathbb Z} a_{k,x}(t)$ converges unconditionally to $R_X (x-t)$ for any $t \in \mathbb R$ and uniformly on $\mathbb R$ (with respect to t). Thus
\begin{eqnarray*}
	\lim_{N \rightarrow \infty} \sum_{n=-N}^{N} \int_{t_n-\frac{\delta}{2}}^{t_n+\frac{\delta}{2}} \Big( \sum_{k \in \mathbb Z} a_{k,x}(t) \Big)u_n (x) dx\, \overline{r_n (t)} 
	=  \sum_{n\in\mathbb Z} \int_{t_n-\frac{\delta}{2}}^{t_n+\frac{\delta}{2}}  R_X(x-t) u_n (x) dx\, \overline{r_n (t)}
\end{eqnarray*}
converges unconditionally to $\overline{R_X(0)}$ for any $t \in \mathbb R$ and uniformly on any compact subset of $\mathbb R$, so that the double summation \eqref{v1.0eq2} converges unconditionally for any $t \in \mathbb R$ and uniformly on any compact subset of $\mathbb R$ as $N$ goes to infinity. 

{Since the absolute convergence is equivalent to the unconditional convergence for real- or complex-valued series (see e.g. Lemma 3.3 of \cite{Heil:2011jr}),} 
\begin{equation} \label{v1.7theproof}
	\lim_{N \rightarrow \infty} \sum_{k,n =-N}^{N}  \int_{t_k-\frac{\delta}{2}}^{t_k+\frac{\delta}{2}} \int_{t_n-\frac{\delta}{2}}^{t_n+\frac{\delta}{2}} {R_X(x-y)}u_n (x) u_k (y) dx dy\, {r}_k(t) \overline{r_n (t)}
\end{equation} 
converges absolutely for any $t \in \mathbb R$.

Thus it follows by \eqref{v1.0eq5.1} that for any $t \in \mathbb R$
\begin{eqnarray*}
	\eqref{v1.7theproof} 
	& = &  \sum_{n \in \mathbb Z}  \int_{t_n-\frac{\delta}{2}}^{t_n+\frac{\delta}{2}} \Big( \sum_{k \in \mathbb Z} \int_{t_k-\frac{\delta}{2}}^{t_k+\frac{\delta}{2}} {R_X(x-y)} u_k (y)  dy \,{r}_k(t)\Big) \, u_n (x) ds\,  \overline{r_n (t)} \\
	& = &   \sum_{n \in \mathbb Z}  \int_{t_n-\frac{\delta}{2}}^{t_n+\frac{\delta}{2}}R_X (x-t) \, u_n (x) ds \, \overline{r_n (t)} \\
	& = & \overline{R_X(0)}
\end{eqnarray*}
from which we have $\lim_{N \rightarrow \infty} E \Big| X(t) - \sum_{n =-N}^{N} \langle X,u_n\rangle r_n (t) \Big|^2 = 0$, {uniformly on any compact subset of $ \mathbb R$.}
\end{proof}

The globally uniform convergence of \eqref{v1.0thm2.1eq4} is not guaranteed in general. It is worth mentioning that the necessary and sufficient condition for the WSK sampling expansion of so-called I-process, i.e., a band-limited stochastic process possessing an absolutely continuous spectral distribution function, to be globally uniformly convergent is addressed in \cite{Boche:2010dv}.

The WSK sampling theorem for band-limited stochastic processes \cite{Balakrishnan:1957gx,Lloyd:1959vh,ZAKAI:1965ha} states that any $X(t)$ band-limited to $[-\pi\omega, \pi\omega]$ can be reconstructed by 
\begin{equation*}
	X(t) = \sum_{n \in \mathbb Z} X(\frac{n}{\omega})\sinc(\omega t - n) 
\end{equation*} which converges in mean square or with probability 1 for any $t \in \mathbb R$. In this case, the samples $\{X({n}/{\omega}):n \in \mathbb Z\}$ are taken at the Nyquist rate $\omega$. 
However, Theorem \ref{v1.0thm2.2} does not cover the case of local averages taken at the Nyquist rate, $\{ \langle X, u_n \rangle : n \in \mathbb Z \}$, since if $t_n = {n}/{\omega}$ then $t_{n+1} - t_n = {1}/{\omega} > {1}/({\sqrt{2}\omega\pi})$, $n \in \mathbb Z$, where $\supp u_n \subset [t_n -{\delta}/{2}, t_n + {\delta}/{2}]$.

\section{Local averages taken at the Nyquist rate}\label{Nyquist}
In this section we consider the case of local averages $\{ \langle X, v_n \rangle : n \in \mathbb Z \}$ taken at the Nyquist rate. 
In what follows, we assume $\omega = 1$. The aim of Section \ref{Nyquist} is to derive an ASE of the form:
\begin{equation}\label{v3.3sec3eq1}
	X(t) = \sum_{n \in \mathbb Z} \langle X,v_n \rangle s_n(t)
\end{equation} which converges in a proper sense. Here, $\{ s_n (t) : n \in \mathbb Z \} $ is a frame of $PW_\pi$ and $\{ v_n(t) : n \in \mathbb Z\}$ is a sequence of average functions satisfying
\begin{eqnarray} \label{v1.0eq7.1}
	&& \textnormal{supp}\, v_n \subset [n-a,n+b] ~\textnormal{ for } a,b \geq 0 \textnormal{ and } a+b > 0, \\ 
	&& 0 \leq v_n  \in L^2 (\mathbb R), \textnormal{ and } \int_{-\infty}^{\infty} v_n (t) dt = \int_{n-a}^{n+b} v_n (t) dt = 1,~n \in \mathbb Z. \nonumber
\end{eqnarray}

Our main results, Theorem \ref{v1.0thm2.4} and \ref{v1.0thm2.5}, are based on the following ASE in $PW_\pi$. 
\begin{proposition}[Theorem 3.2. of \cite{Kang:2011bf}] \label{v1.0prop2.3}
Let $\{v_n (t) : n \in \mathbb R\}$ be a sequence of average functions satisfying \eqref{v1.0eq7.1}
and let $\delta := \max\{a,b\}$. 
If $\sqrt{\delta(a+b)} < {1}/{\pi}$, then there is a frame $\{ s_n (t) : n \in \mathbb Z\} $ of $PW_{\pi}$ such that 
\begin{equation} \label{v1.4propeq8}
	f(t)= \sum_{n \in \mathbb Z} \langle f, v_n \rangle s_n (t), ~f \in PW_{\pi}
\end{equation}
which converges in $L^2(\mathbb R)$ and uniformly and absolutely on $\mathbb R$. {In this case,  $\{ s_n (t) : n \in \mathbb Z\} $ and $\{ \mathcal{P} v_n (t) : n \in \mathbb Z \} $ are dual frame pairs of $PW_\pi$ where $\mathcal{P}$ is the orthogonal projection of $L^2(\mathbb R)$ onto $PW_\pi$. }
\end{proposition}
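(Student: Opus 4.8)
The plan is to deduce the statement from the single fact that $\{\mathcal P v_n : n \in \mathbb Z\}$ is a frame of $PW_\pi$, and then to take $\{s_n : n\in\mathbb Z\}$ to be its canonical dual frame. Note first that each $v_n$ is a nonnegative, compactly supported $L^2$ function with $\int v_n = 1$, hence lies in $L^1(\mathbb R)\cap L^2(\mathbb R)$, so all the pairings below are well defined; moreover, since $\mathcal P$ is an orthogonal (hence self-adjoint) projection, $\langle f, v_n\rangle = \langle \mathcal P f, v_n\rangle = \langle f, \mathcal P v_n\rangle$ for every $f \in PW_\pi$. Thus, once $\{\mathcal P v_n\}$ is known to be a frame of $PW_\pi$ with canonical dual $\{s_n\}\subset PW_\pi$, the frame reconstruction formula gives $f = \sum_{n}\langle f, \mathcal P v_n\rangle s_n = \sum_{n}\langle f, v_n\rangle s_n$ with convergence in $L^2(\mathbb R)$ for every $f\in PW_\pi$, and $\{s_n\}$ and $\{\mathcal P v_n\}$ are dual frame pairs of $PW_\pi$ by construction.

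To show that $\{\mathcal P v_n\}$ is a frame of $PW_\pi$ I would argue by perturbation of the orthonormal basis $\{\sinc(\cdot - n) : n \in \mathbb Z\}$ of $PW_\pi$, which satisfies $\langle f, \sinc(\cdot - n)\rangle = f(n)$ and therefore has frame bounds $(1,1)$. The key quantitative step is the estimate
\begin{equation*}
\sum_{n\in\mathbb Z}\bigl|\langle f, v_n\rangle - f(n)\bigr|^2 \;\le\; \pi^2\,\delta(a+b)\,\|f\|_{L^2(\mathbb R)}^2, \qquad f \in PW_\pi .
\end{equation*}
Granting this, the hypothesis $\sqrt{\delta(a+b)} < 1/\pi$ makes $\mu := \pi^2\delta(a+b) < 1$, strictly below the lower frame bound $1$ of $\{\sinc(\cdot-n)\}$; since $f(n) - \langle f, v_n\rangle = \langle f, \sinc(\cdot-n) - \mathcal P v_n\rangle$ for $f\in PW_\pi$, a standard frame perturbation theorem of Paley--Wiener type (see, e.g., \cite{Christensen:2008wy}) shows that $\{\mathcal P v_n : n\in\mathbb Z\}$ is a frame of $PW_\pi$ with bounds $\bigl((1-\sqrt\mu)^2,\,(1+\sqrt\mu)^2\bigr)$.

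Proving the displayed estimate is the technical heart of the argument, and the main difficulty is to obtain the \emph{sharp} constant $\pi^2\delta(a+b)$ rather than a cruder bound. The ingredients I would use are: Jensen's (or the Cauchy--Schwarz) inequality with respect to the probability density $v_n(t)\,dt$, which gives $|\langle f,v_n\rangle - f(n)|^2 = \bigl|\int v_n(t)(f(t)-f(n))\,dt\bigr|^2 \le \int v_n(t)|f(t)-f(n)|^2\,dt$; the representation $f(t)-f(n) = \int_n^t f'(s)\,ds$ (valid since $f\in PW_\pi$ is absolutely continuous with $f'\in L^2$) together with Cauchy--Schwarz to control $|f(t)-f(n)|^2$ by $|t-n|$ times a local $L^2$-norm of $f'$; the support constraint $\supp v_n \subset [n-a,n+b]$, which bounds $|t-n|$ by $\delta$ and splits the error according to whether $t$ lies to the left or to the right of $n$, thereby separating the contributions of the lengths $a$ and $b$; the observation that in the regime forced by the hypothesis one has $\delta \le a+b < 1$, so the intervals $\{[n-a,n+b]\}$ overlap only trivially and the local norms of $f'$ reassemble into $\|f'\|_{L^2(\mathbb R)}^2$ with controlled multiplicity; and finally Bernstein's inequality $\|f'\|_{L^2(\mathbb R)} \le \pi\|f\|_{L^2(\mathbb R)}$, valid on $PW_\pi$. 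Careful bookkeeping of these steps yields the factor $\delta(a+b)$.

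It remains to upgrade $L^2(\mathbb R)$-convergence of $\sum_n\langle f,v_n\rangle s_n$ to uniform and absolute convergence on $\mathbb R$. Here I would use that $PW_\pi$ is a reproducing kernel Hilbert space with kernel $K_t(\cdot) = \sinc(t-\cdot)$, so $\|K_t\|_{L^2(\mathbb R)}^2 = \sinc 0 = 1$ for all $t$, and hence $\sum_n |s_n(t)|^2 = \sum_n |\langle s_n, K_t\rangle|^2 \le B'$ uniformly in $t$, where $B'$ is an upper frame bound of $\{s_n\}$. Taking the inner product of $f = \sum_n\langle f,v_n\rangle s_n$ with $K_t$ gives $f(t) = \sum_n \langle f, v_n\rangle s_n(t)$ pointwise; the series converges absolutely by Cauchy--Schwarz since $(\langle f,v_n\rangle)_n \in \ell^2(\mathbb Z)$, and the tail of its partial sums is dominated by $\bigl(\sum_{|n|>N}|\langle f,v_n\rangle|^2\bigr)^{1/2}\sqrt{B'}$, which tends to $0$ independently of $t$, giving uniform convergence on $\mathbb R$.
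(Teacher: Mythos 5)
Your overall architecture is sound, and it is in effect a self-contained reconstruction of what the paper merely cites: the paper's proof consists of checking that $\phi=\sinc$ satisfies the hypotheses of Theorem 3.2 of \cite{Kang:2011bf} (namely $|Z_\phi(0,\xi)|=1$ and $\|Z_{\phi'}\|_{L^\infty}=\pi$ for the Zak transform $Z$) and then invoking that theorem, whereas you re-derive the special case by perturbing the orthonormal basis $\{\sinc(\cdot-n)\}$ of $PW_\pi$. Everything downstream of your displayed estimate is fine: the identity $\langle f,v_n\rangle=\langle f,\mathcal Pv_n\rangle$, the Paley--Wiener-type frame perturbation with $\mu=\pi^2\delta(a+b)<1$, the passage to the canonical dual, and the upgrade to absolute and uniform convergence via $\sum_n|s_n(t)|^2\le B'\|K_t\|^2=B'$.

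The gap is in the proof of the displayed estimate itself, which is precisely the content of the cited theorem and cannot be settled by the bookkeeping you describe. The chain (Jensen) $\to$ $|f(t)-f(n)|^2\le|t-n|\int_{\min(n,t)}^{\max(n,t)}|f'|^2$ $\to$ $|t-n|\le\delta$ $\to$ (non-overlap of the supports) $\to$ (Bernstein) yields only
$\sum_n|\langle f,v_n\rangle-f(n)|^2\le\delta\|f'\|_{L^2}^2\le\pi^2\delta\,\|f\|_{L^2}^2$,
i.e.\ the factor $\delta$ rather than $\delta(a+b)$. In the admissible regime one has $a+b<\sqrt2/\pi<1$, so $\delta(a+b)$ is strictly smaller than $\delta$, and the hypothesis controls only $\pi^2\delta(a+b)<1$ while $\pi^2\delta$ may exceed $1$ (e.g.\ $a=b$ near $1/(\sqrt2\pi)$ gives $\pi^2\delta\approx\pi/\sqrt2>2$); hence the perturbation argument does not close. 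Moreover the local inequality is saturated: take $a=0$ and $v_n$ an approximate point mass at $n+b$, with $f'$ essentially constant on $[n,n+b]$; then $|\langle f,v_n\rangle-f(n)|^2\approx b\int_n^{n+b}|f'|^2$ exactly, so no rearrangement of these ingredients produces the missing factor $a+b$. The target bound is nevertheless true, but it requires the $\ell^2$-in-$n$ (discrete Bernstein) inequality $\sum_n|f(n+\tau)-f(n)|^2=\frac1{2\pi}\int_{-\pi}^{\pi}|e^{i\tau\xi}-1|^2|\hat f(\xi)|^2\,d\xi\le\pi^2\tau^2\|f\|^2$, combined with Minkowski's integral inequality over $\tau\in[-a,b]$ --- this Fourier-side input is exactly what the Zak-transform condition $\|Z_{\phi'}\|_{L^\infty}=\pi$ encodes in \cite{Kang:2011bf}. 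You should either insert that estimate in place of the pointwise use of $f'$, or cite Theorem 3.2 of \cite{Kang:2011bf} as the paper does; as written, your argument establishes the proposition only under the stronger hypothesis $\delta<1/\pi^2$.
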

\begin{proof}
Let $\phi(t) := \sinc(t)$. Note that $\phi$ is differentiable, $\phi' \in L^2 (\mathbb R)$, $|Z_\phi (0,\xi)| = 1$ for $\xi \in \mathbb R$, and $\| Z_{\phi'}(t,\xi) \|_{L^\infty(\mathbb R^2)} = \pi$, where $Z_f (t,\xi):= \sum_{n \in \mathbb Z} f (t-n) e^{in\xi}$ denotes the Zak transform \cite{JANSSEN:1988ul} of $f(t) \in L^2 (\mathbb R)$. Then Proposition \ref{v1.0prop2.3} is an immediate consequence of Theorem 3.2 of \cite{Kang:2011bf} with $\phi(t) = \sinc(t)$.
\end{proof}

Since $R_X(t'-\cdot) \in PW_\pi$ for a given $t' \in \mathbb R$, assuming $\sqrt{\delta(a+b)} < {1}/{\pi}$, we have by Proposition \ref{v1.0prop2.3} 
		\begin{equation} \label{v1.0sec4lem3.2}
			R_X (t'-t)  = \sum_{n \in \mathbb Z } \langle R_X (t'-\cdot), v_n (\cdot) \rangle s_n (t),~ t \in \mathbb R 
		\end{equation}
which converges in $L^2 (\mathbb R)$ and absolutely and uniformly on $\mathbb R$. As already mentioned in the proof of Theorem \ref{v1.0thm2.2}, the absolute convergence is equivalent to the unconditional convergence for real- or complex-valued series, so \eqref{v1.0sec4lem3.2} also converges unconditionally on $\mathbb R$.

With the same definition as in Proposition \ref{v1.0prop2.3}, we have a counterpart statement for stochastic processes:
\begin{theorem}\label{v1.0thm2.4}
Let $\{ X(t) : t \in \mathbb R \}$ be a wide sense stationary process band-limited to $[-\pi, \pi]$. If $\sqrt{\delta(a+b)} < {1}/{\pi}$, then \eqref{v3.3sec3eq1}
converges in mean square, {uniformly on any compact subset of $\mathbb R$,} where $\{ s_n (t) : n \in \mathbb Z \}$ is a frame of $PW_{\pi}$ for which \eqref{v1.4propeq8} holds.
\end{theorem}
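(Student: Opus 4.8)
The plan is to follow the proof of Theorem \ref{v1.0thm2.2} almost verbatim, substituting the oversampled frame expansion \eqref{v1.0eq0.1} by the Nyquist-rate expansion \eqref{v1.4propeq8} furnished by Proposition \ref{v1.0prop2.3}. Fix $t \in \mathbb R$ and $N \in \mathbb N$. Since $E|X(t)|^2 = R_X(0) < \infty$, Fubini's theorem gives, exactly as in \eqref{v1.0eq2},
$$
E\Big| X(t) - \sum_{n=-N}^N \langle X, v_n\rangle s_n(t)\Big|^2 = R_X(0) - C_N(t) - \overline{C_N(t)} + D_N(t),
$$
where $C_N(t) := \sum_{n=-N}^N \langle R_X(t-\cdot), v_n(\cdot)\rangle s_n(t)$ and $D_N(t) := \sum_{k,n=-N}^N \big(\int\int R_X(x-y) v_n(x) v_k(y)\,dx\,dy\big) s_k(t)\overline{s_n(t)}$. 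It then suffices to prove that $C_N(t) \to R_X(0)$ and $D_N(t) \to \overline{R_X(0)}$, with both limits uniform in $t$ on compact subsets of $\mathbb R$, since then the right-hand side tends to $R_X(0) - R_X(0) - \overline{R_X(0)} + \overline{R_X(0)} = 0$.

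For $C_N(t)$: because $R_X(t-\cdot) \in PW_\pi$, setting $t'=t$ in \eqref{v1.0sec4lem3.2} yields $\sum_{n\in\mathbb Z} \langle R_X(t-\cdot), v_n\rangle s_n(t) = R_X(0)$, and this series converges unconditionally — the absolute convergence asserted in Proposition \ref{v1.0prop2.3} and \eqref{v1.0sec4lem3.2} being equivalent to unconditional convergence for scalar series (see e.g. Lemma 3.3 of \cite{Heil:2011jr}) — so the symmetric partial sums $C_N(t)$ converge to $R_X(0)$; by the uniformity on $\mathbb R$ of \eqref{v1.4propeq8} this convergence is uniform in $t$ on any compact set. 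For $D_N(t)$: as in Theorem \ref{v1.0thm2.2}, I would set $a_{k,x}(t) := \langle R_X(x-\cdot), v_k\rangle s_k(t)$, so that $\sum_k a_{k,x}(t) = R_X(x-t)$ converges unconditionally in $k$ and uniformly in $t$; integrating against $v_n(x)$ over $\supp v_n$, summing in $n$, and applying \eqref{v1.0sec4lem3.2} once more identifies the iterated sum with $\sum_{n\in\mathbb Z}\langle R_X(\cdot-t), v_n\rangle \overline{s_n(t)} = \overline{R_X(0)}$, uniformly on compacts. The unconditional convergence of the scalar double series indexed by $(k,n)$ — again equivalent to absolute convergence — then legitimizes evaluating $D_N(t)$ in this order, giving $D_N(t) \to \overline{R_X(0)}$.

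The only step requiring genuine care is the interchange of the $k$- and $n$-summations in $D_N(t)$: one must verify that the associated scalar double series converges unconditionally (hence absolutely), so that it may be summed in any order, in particular as the iterated sum computed above. This is handled exactly as in the proof of Theorem \ref{v1.0thm2.2}, leaning on the unconditional and uniform-on-$\mathbb R$ convergence of the $PW_\pi$ expansion \eqref{v1.4propeq8} — which is where the hypothesis $\sqrt{\delta(a+b)} < 1/\pi$ enters, through Proposition \ref{v1.0prop2.3} — and on the equivalence of absolute and unconditional convergence for real- or complex-valued series. Everything else is bookkeeping: assembling the four terms and reading off that their sum tends to $0$, uniformly on compact subsets of $\mathbb R$, which is the assertion.
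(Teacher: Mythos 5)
Your proposal is correct and is precisely what the paper intends: its proof of this theorem consists of the single remark that the argument is the same as for Theorem \ref{v1.0thm2.2}, with the oversampled frame expansion \eqref{v1.0eq0.1} replaced by the Nyquist-rate expansion \eqref{v1.4propeq8} from Proposition \ref{v1.0prop2.3}. You have simply written out that substitution in full, including the one point that needs care (the unconditional/absolute convergence justifying the rearrangement of the double sum), so there is nothing to add.
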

\begin{proof}
The proof is essentially the same as the proof of Theorem \ref{v1.0thm2.2}.
\end{proof}

Furthermore, we prove the ASE \eqref{v3.3sec3eq1} converges for almost all sample functions.
\begin{theorem}\label{v1.0thm2.5}
Let the notation and the assumption be the same as in Theorem \ref{v1.0thm2.4}. Then \eqref{v3.3sec3eq1} holds for almost all sample functions.
\end{theorem}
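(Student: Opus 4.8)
The plan is to upgrade the mean-square convergence of Theorem \ref{v1.0thm2.4} to almost-sure convergence along the lines of the classical argument of Lloyd \cite{Lloyd:1959vh} and Beutler \cite{Beutler:1961wk} for the WSK expansion. Write $S_N(t) := \sum_{n=-N}^{N} \langle X, v_n \rangle s_n(t)$ and let $E_N(t) := X(t) - S_N(t)$. The key observation is that $\langle X, v_n \rangle = \int_{n-a}^{n+b} X(y) v_n(y)\, dy$ is itself a well-defined random variable in the Hilbert space $\mathcal H_X := \overline{\mathrm{span}}\{X(t):t\in\mathbb R\} \subset L^2(\Omega)$ (this uses mean-square continuity of $X$, which holds since $R_X \in PW_\pi$ is continuous, so the Riemann integral converges in $L^2(\Omega)$), and that the map $T:\mathcal H_X \to PW_\pi$ sending $X(t)\mapsto e^{it\cdot}\mathbf 1_{[-\pi,\pi]}$, extended linearly and isometrically via the spectral representation, carries $\langle X, v_n\rangle$ to $\mathcal{F}^{-1}$-side of $\widehat{v_n}\mathbf 1_{[-\pi,\pi]}$, i.e. to $\mathcal P \check v_n$ up to the usual conventions. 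Under $T$, the random quantity $E_N(t)$ corresponds to the deterministic element $g_{N,t} := K_t - \sum_{n=-N}^N \langle K_t, \mathcal P v_n\rangle s_n$ of $PW_\pi$, where $K_t$ is the reproducing kernel of $PW_\pi$ at $t$; hence $E\,|E_N(t)|^2 = \|g_{N,t}\|_{L^2}^2 \to 0$ by Proposition \ref{v1.0prop2.3} applied to $f = K_t$.

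The second step is to pass from $E\,|E_N(t)|^2 \to 0$ for each fixed $t$ to almost-sure convergence of $S_N(t)$ for each fixed $t$. I would do this by extracting a fast-converging subsequence and controlling the gaps. First, from Proposition \ref{v1.0prop2.3} one gets a quantitative rate: since $\{s_n\}$ is a frame and the tail $\sum_{|n|>N}\langle f, v_n\rangle s_n$ is the image of a shrinking tail of a frame expansion, one has $\|g_{N,t}\|_{L^2}^2 \le C_t \sum_{|n|>N} |\langle K_t, \mathcal P v_n\rangle|^2$, and the right-hand side is a convergent-series tail uniform for $t$ in compacts (because $t\mapsto K_t$ is continuous into $PW_\pi$ and the frame lower bound is uniform). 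Choose $N_j$ with $E\,|E_{N_j}(t)|^2 \le 2^{-j}$; then $\sum_j E\,|E_{N_j}(t)|^2 < \infty$, so by Borel–Cantelli (via Chebyshev) $E_{N_j}(t)\to 0$ a.s. For the intermediate indices $N_{j}\le N < N_{j+1}$, estimate $E\,\big|S_N(t)-S_{N_j}(t)\big|^2 = \big\|\sum_{N_j<|n|\le N}\langle K_t,\mathcal P v_n\rangle s_n\big\|_{L^2}^2 \le B\sum_{|n|>N_j}|\langle K_t,\mathcal P v_n\rangle|^2$, which is again summable in $j$ after choosing the $N_j$ sparse enough, giving $\max_{N_j\le N<N_{j+1}}|S_N(t)-S_{N_j}(t)|\to 0$ a.s. by another Borel–Cantelli step. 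Combining, $S_N(t)\to X(t)$ a.s. for each fixed $t$.

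The third step is to upgrade "for each fixed $t$, a.s." to "a.s., for all $t$ simultaneously" — i.e. convergence for almost all \emph{sample functions} on all of $\mathbb R$. Here I would invoke separability/continuity: both $X(t)$ and each partial sum $S_N(t)$ are (versions that are) continuous in $t$ — $S_N$ trivially, since each $s_n\in PW_\pi\subset C(\mathbb R)$ and the sum is finite, and $X$ can be taken mean-square continuous hence with a measurable separable version — so it suffices to have a.s. convergence on the countable dense set $\mathbb Q$, which the previous step gives, together with an equicontinuity/uniform-tail control on compacts to promote pointwise-on-$\mathbb Q$ convergence to uniform-on-compacts convergence. The uniform tail bound on compact $[-M,M]$ follows from $\sup_{|t|\le M}\|g_{N,t}\|_{L^2}\to 0$ and a maximal inequality: the process $t\mapsto \sum_{|n|>N_j}\langle K_t,\mathcal P v_n\rangle s_n(t)$ evaluated at $X$ is again in $\mathcal H_X$, continuous in $t$, with second moment tending to $0$ uniformly on $[-M,M]$, so a Borel–Cantelli argument over a net, plus the modulus-of-continuity control coming from boundedness of $K_t'$ (the kernel of $PW_\pi$ has bounded derivative, as used already for the uniform convergence in the proof of Theorem \ref{v1.0thm2.2}), closes it.

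The main obstacle I expect is precisely this last passage — producing a genuinely uniform-on-compacts almost-sure statement rather than merely a "for each $t$, a.s." statement, since the null set a priori depends on $t$. The cleanest route is the reproducing-kernel bound: because $PW_\pi$ has a bounded reproducing kernel with bounded derivative, both $\|K_t\|$ and $\|K_t - K_s\|$ are controlled ($\|K_t-K_s\|_{L^2}\le c|t-s|$), which simultaneously gives the uniform-tail estimate on compacts and the Lipschitz-type modulus of continuity needed to interpolate from a dense countable set. Everything else is a routine assembly of Chebyshev + Borel–Cantelli with a subsequence, exactly as in \cite{Lloyd:1959vh, Beutler:1961wk}, now driven by Proposition \ref{v1.0prop2.3} in place of the classical WSK sampling series.
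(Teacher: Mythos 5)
Your architecture (fix $t$, get almost sure convergence by Chebyshev and Borel--Cantelli along a subsequence plus control of the intermediate partial sums, then upgrade to all $t$ by path continuity) is a reasonable and in fact more ambitious plan than the paper's, but two load-bearing steps fail as written. The first is the isometry: the spectral representation gives $E[X(t)\overline{X(s)}]=\int_{-\pi}^{\pi}e^{i(t-s)\lambda}\,dF(\lambda)$, so the map $X(t)\mapsto e^{it\cdot}$ is an isometry of $\mathcal H_X$ into $L^2([-\pi,\pi],dF)$ for the (arbitrary finite) spectral measure $F$, \emph{not} into $PW_\pi\cong L^2([-\pi,\pi],d\lambda)$; the two coincide only when $R_X=\sinc$. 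Hence $E|E_N(t)|^2\neq\|g_{N,t}\|_{L^2}^2$ in general, and the frame inequalities for $\{s_n\}$ and $\{\mathcal P v_n\}$, which are inequalities with respect to Lebesgue measure on the spectral side, do not transfer to $L^2(dF)$. This removes exactly the quantitative bound $E|S_N(t)-S_{N_j}(t)|^2\le B\sum_{|n|>N_j}|\langle K_t,\mathcal P v_n\rangle|^2$ that drives both your subsequence selection and your gap control. (The mean-square statements themselves survive, but via the $R_X$-based computation of Theorem \ref{v1.0thm2.2}, where the expansion is applied to $R_X(t'-\cdot)\in PW_\pi$ rather than to $K_t$.)

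The second gap is the block control. Even granting $E|S_N(t)-S_{N_j}(t)|^2\le\varepsilon_j$ for every $N$ in $[N_j,N_{j+1})$, bounding $\max_{N_j\le N<N_{j+1}}|S_N(t)-S_{N_j}(t)|$ by Chebyshev plus a union bound costs a factor $N_{j+1}-N_j$, and ``choosing the $N_j$ sparse enough'' goes the wrong way: sparser $N_j$ lengthens the blocks. Since the only information on the tails is that they tend to $0$ with no rate, $(N_{j+1}-N_j)\varepsilon_j$ cannot be made summable without a maximal inequality, and the summands $\langle X,v_n\rangle s_n(t)$ are neither independent nor orthogonal, so none is available off the shelf; an $L^1$ absolute-convergence shortcut also fails because $E|\langle X,v_n\rangle|$ does not decay in $n$ by stationarity while $\sum_n|s_n(t)|$ diverges when $\hat s$ has jumps at $\pm\pi$. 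The paper itself takes a much shorter route: it defines $\tilde X(t)$ as the mean-square limit, notes $E|X(t)-\tilde X(t)|^2=0$ for each $t$ by Theorem \ref{v1.0thm2.4}, and then appeals to the uniformity of the convergence on compact sets together with the fact that almost all sample paths are continuous, indeed entire \cite{Belyaev:1959iv}, to conclude; it does not attempt your subsequence/maximal argument. Your final ``for all $t$ simultaneously'' step also leans on an equicontinuity claim for the tails that is asserted rather than proved, and it would need the same (currently missing) uniform quantitative control.
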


\begin{proof}
Let 
\begin{equation}\label{v1.9thmproofeq}
	\tilde{X}(t)  := \lim_{N \to \infty} \sum_{|n| < N } \langle X,v_n  \rangle s_n (t)
\end{equation} in mean square sense.
We have shown in the proof of Theorem \ref{v1.0thm2.2} (with a proper modification of the notation) that for any $t \in \mathbb R$
\begin{eqnarray*} \label{v1.0eqepsilon}
	E|X(t) - \tilde{X}(t) |^2  =   \lim_{N \rightarrow \infty} E|X(t) - \sum_{n =-N}^N \langle X,v_n  \rangle s_n (t) |^2.
\end{eqnarray*}
Thus, $X(t) = \tilde{X}(t)$ with probability 1 by Theorem \ref{v1.0thm2.4}. The right of \eqref{v1.9thmproofeq} converges uniformly {on any compact subset of $\mathbb R$} and $X(t)$ is continuous for almost all sample functions:  in fact, almost all sample functions are entire functions \cite{Belyaev:1959iv}. Then the theorem follows.
\end{proof}

\section{Error estimation}\label{error}

The aim of this section is to estimate truncation and aliasing errors of the ASE \eqref{v3.3sec3eq1}.
To this end we always assume $\sqrt{\delta(a+b)} < {1}/{\pi}$ so that, by Theorem \ref{v1.0thm2.5}, 
$\{ \mathcal{P}v_n (t) : n \in \mathbb Z\}$ and $\{ s_n (t): n \in \mathbb Z\}$ are dual frame pairs of $PW_\pi$ for which \eqref{v3.3sec3eq1}
holds for almost all sample functions of a given wide sense stationary process $X(t)$ band-limited to $[-\pi,\pi]$. Here, $\mathcal{P}$ is the orthogonal projection of $L^2(\mathbb R)$ onto $PW_\pi$. Note that $\mathcal{P}$ is a shift-invariant operator, i.e., $\mathcal{P}[f(\cdot-n)](t) = \mathcal{P}f (t-n),~n \in \mathbb Z$, for any $f \in L^2(\mathbb R)$.

In the following we assume further that $u_n (t) = u(t-n)$ for $n \in \mathbb Z$. 
Then $\{ \mathcal{P}u_n(t) = \mathcal{P}u(t-n) : n \in \mathbb Z\}$ is a frame of $PW_\pi$ if and only if there exist constants $B \geq A >0$ such that 
\begin{equation}\label{v2.3sec4eq0.1}
	A \leq | \hat{u}(\xi) | \leq B \textnormal{ a.e. on } [-\pi,\pi]
\end{equation}
(Theorem 2 of \cite{Garcia:2005fz}). As a matter of fact, the condition \eqref{v2.3sec4eq0.1} is also a sufficient and necessary condition for $\{ \mathcal{P}u(t-n) : n \in \mathbb Z\}$ to be a Riesz basis of $PW_\pi$.

Since $\mathcal{P}u(t) \in PW_\pi$, we have by \eqref{v1.4propeq8}
\begin{equation*}
	\mathcal{P}u(t) = \sum_{n \in \mathbb Z} \langle \mathcal{P}u(\cdot) , u(\cdot -n) \rangle s(t-n)
\end{equation*}
so that, via the Fourier transform,
\begin{eqnarray*}
	\hat{u}(\xi)\chi_{[-\pi,\pi]}(\xi) 
	& = & \Big( \sum_{n \in \mathbb Z} \langle \hat{u}(\xi)\overline{\hat{u}(\xi)} , e^{-in\xi} \rangle_{L^2[-\pi,\pi]} e^{-in\xi} \Big) \hat{s}(\xi) \\
	& = & {| \tilde{\hat{u}}(\xi)|^2 \hat{s}(\xi)}
\end{eqnarray*}
which holds in $L^2(\mathbb R)$ {where $\tilde{\hat{u}}(\xi)$ is $2\pi$-periodic extension of $\hat{u}(\xi)\big|_{[-\pi,\pi]}$, the restriction of $\hat{u}(\xi)$ on $[-\pi,\pi]$.}
Thus we have
\begin{equation} \label{v2.3sec4eq1}
	\overline{\hat{s}(\xi)} = \frac{1}{{\hat{u}(\xi)}} \chi_{[-\pi,\pi]}(\xi) \textnormal{ a.e. on }  \mathbb R.
\end{equation}

In summary we consider the ASE of the form
 \begin{equation}\label{v2.6sec4eq2}
	{X}(t)  = \sum_{n \in \mathbb Z} \langle X(\cdot),u(\cdot-n)  \rangle s(t-n)
\end{equation}
where $u(t)$ and $s(t)$ satisfy \eqref{v2.3sec4eq0.1} and \eqref{v2.3sec4eq1}, respectively.

It is, by definition, unavoidable that $\hat{s}(\xi)$ has discontinuities at $\pm \pi$. Thus, for a given $t \in \mathbb R$, $s(t-n)$ decays slowly as $n$ goes to infinity so that convergence speed of  \eqref{v2.6sec4eq2} is also slow. To overcome this, we adapt so-called oversampling technique by the guard-band assumption. This is introduced in \cite{Yao:1966hf,Helms:1962wo,BrownJr:1969wr} to estimate truncation error bound of the WSK sampling expansion of band-limited functions. The same method is also applied to truncation error estimation of the WSK sampling expansion of band-limited stochastic processes \cite{BrownJr:1968uh}.

Consider $f(t) \in PW_\omega$ where $0< \omega <\pi$, i.e., $\supp \hat{f} \subset [-\omega,\omega]\subsetneq [-\pi,\pi]$. Since $PW_\omega \subseteq PW_\pi$, we have by Proposition \ref{v1.0prop2.3}
\begin{equation*}
	f(t) = \sum_{n \in \mathbb Z} \langle f( \cdot), u(\cdot-n) \rangle s(t-n)
\end{equation*}
which is, via the Fourier transform, equivalent to 
\begin{equation}\label{v2.8sec4eq1}
	\hat{f}(\xi) = \hat{f}(\xi)\theta(\xi) = \sum_{n \in \mathbb Z} \langle f( \cdot), u(\cdot-n) \rangle \hat{s}(\xi) \theta(\xi)e^{-in\xi}
\end{equation}
where $\theta(\xi)$ is an arbitrary smooth function satisfying $\theta(\xi) = 1$ on $[-\omega,\omega]$ and $\theta(\xi)=0$ on $\mathbb R \backslash [-\pi,\pi]$.
Applying the inverse Fourier transform on \eqref{v2.8sec4eq1} gives
\begin{equation}\label{v2.8sec4eq2}
	f(t) = \sum_{n \in \mathbb Z} \langle f( \cdot), u(\cdot-n) \rangle \tilde{s}(t-n)
\end{equation}
where 
\begin{equation}\label{v2.8sec4eq3}
	\hat{\tilde s} (\xi) = \hat s (\xi)\theta(\xi) = \frac{1}{\overline{\hat u (\xi)}} \theta (\xi) 
\end{equation}
in $L^2(\mathbb R)$. Note that $\tilde s (t) \in PW_\pi$.

\begin{lemma} \label{v2.3sec4lemma1}
For $\tilde s(t)\in PW_\pi$ satisfying \eqref{v2.8sec4eq3}, if $\theta(\xi)$ is $p$-times continuously differentiable for some integer $p>1$, i.e., $\theta(\xi) \in C^p (\mathbb R)$, then for any given $t \in \mathbb R$
\begin{equation}\label{v2.4sec4lemeq1}
	|\tilde s(t-n)| \leq \frac{C_p(t)}{|n|^p},~ n \in \mathbb Z 
\end{equation}
where 
\begin{equation} \label{v3.4sec4eq2}
	C_p(t) := \frac{1}{2\pi} \int_{-\pi}^{\pi} \Big| (\frac{1}{\hat{u}(\xi)} \theta(\xi)e^{-it\xi} )^{(p)} \Big| d\xi < \infty .
\end{equation}
\end{lemma}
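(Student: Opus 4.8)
The plan is to prove the decay estimate \eqref{v2.4sec4lemeq1} by exhibiting $\tilde s(t-n)$ as a Fourier coefficient of a smooth, compactly supported function and then integrating by parts $p$ times. First I would observe that since $\hat{\tilde s}(\xi) = \tfrac{1}{\overline{\hat u(\xi)}}\theta(\xi)$ is supported in $[-\pi,\pi]$ and, by \eqref{v2.3sec4eq0.1}, $1/\hat u(\xi)$ is bounded on $[-\pi,\pi]$ while $\theta \in C^p(\mathbb R)$ vanishes near $\pm\pi$, the product $g_t(\xi) := \tfrac{1}{\hat u(\xi)}\theta(\xi)e^{-it\xi}$ is a $C^p$ function on $\mathbb R$ supported in $[-\pi,\pi]$; here one should note that $\hat u$ itself is $C^\infty$ since $u$ has compact support, so the only obstruction to smoothness of $1/\hat u$ is a zero of $\hat u$, which \eqref{v2.3sec4eq0.1} rules out on $[-\pi,\pi]$. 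By the inverse Fourier transform, $\tilde s(t-n) = \frac{1}{2\pi}\int_{-\pi}^{\pi} \hat{\tilde s}(\xi) e^{i(t-n)\xi}\,d\xi = \frac{1}{2\pi}\int_{-\pi}^{\pi} \overline{g_t(\xi)}\, e^{-in\xi}\,d\xi$ (taking conjugates carefully, since $\hat{\tilde s}$ carries a conjugate on $\hat u$), so $\tilde s(t-n)$ is essentially the $n$-th Fourier coefficient of a $C^p$ function vanishing with all derivatives up to order $p$ at the endpoints $\pm\pi$.

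The key step is then integration by parts: because $g_t$ and its derivatives through order $p-1$ vanish at $\pm\pi$ (as $\theta$ and its derivatives vanish there), no boundary terms appear, and integrating by parts $p$ times against $e^{-in\xi}$ yields
\begin{equation*}
	\tilde s(t-n) = \frac{1}{2\pi}\,\frac{1}{(in)^p}\int_{-\pi}^{\pi} \Big(\frac{1}{\hat u(\xi)}\theta(\xi)e^{-it\xi}\Big)^{(p)} e^{-in\xi}\,d\xi
\end{equation*}
for $n \neq 0$ (modulo the conjugation bookkeeping, which does not affect absolute values). Taking absolute values and bounding $|e^{-in\xi}| = 1$ gives exactly $|\tilde s(t-n)| \leq C_p(t)/|n|^p$ with $C_p(t)$ as in \eqref{v3.4sec4eq2}. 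Finiteness of $C_p(t)$ follows from the $p$-th derivative being a continuous function on the compact set $[-\pi,\pi]$: by Leibniz's rule it is a finite sum of products of derivatives of $1/\hat u$ (smooth on $[-\pi,\pi]$ by \eqref{v2.3sec4eq0.1} and smoothness of $\hat u$), derivatives of $\theta$ (continuous since $\theta \in C^p$), and derivatives of $e^{-it\xi}$ (bounded in $\xi$ for fixed $t$), hence bounded on $[-\pi,\pi]$.

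I expect the main obstacle to be the bookkeeping around the conjugate in \eqref{v2.8sec4eq3} — one must be a little careful that the object whose $p$-th derivative appears in $C_p(t)$ is $\tfrac{1}{\hat u(\xi)}\theta(\xi)e^{-it\xi}$ and not its conjugate, which is harmless for the bound but should be stated correctly — and, more substantively, verifying rigorously that $1/\hat u$ is $C^p$ (indeed $C^\infty$) on $[-\pi,\pi]$ so that the integration by parts is legitimate. This uses that $u \in L^2$ with compact support forces $\hat u$ to extend to an entire function, together with the lower bound $A \leq |\hat u(\xi)|$ on $[-\pi,\pi]$ from \eqref{v2.3sec4eq0.1}; one should also note that the vanishing of $\theta$ together with all its derivatives up to order $p$ outside $[-\pi,\pi]$ (and hence at the endpoints, by $C^p$ continuity) is precisely what kills the boundary terms at each stage of the integration by parts. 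The rest is the routine Leibniz-rule estimate establishing $C_p(t) < \infty$.
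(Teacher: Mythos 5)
Your proposal is correct and follows essentially the same route as the paper: writing $\tilde s(t-n)$ as the $n$-th Fourier coefficient of the $C^p$ function $\theta(\xi)e^{it\xi}/\overline{\hat u(\xi)}$ on $[-\pi,\pi]$, integrating by parts $p$ times with the boundary terms killed by the vanishing of $\theta$ and its derivatives at $\pm\pi$, and bounding the resulting integral by $C_p(t)$. Your extra care about the conjugation bookkeeping and about why $1/\hat u$ is smooth on $[-\pi,\pi]$ (via \eqref{v2.3sec4eq0.1} and the entirety of $\hat u$) only makes explicit what the paper leaves implicit.
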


\begin{proof}
Note first that $u(t)$ is compactly supported so that $\hat{u}(\xi)$ is infinitely many differentiable on $\mathbb R$.
Since 
\begin{equation*}
	\tilde s(t-n) =  \frac{1}{2\pi} \int_{-\pi}^{\pi}\hat{s}(\xi)\theta(\xi) e^{it\xi}  e^{-in\xi} d\xi = \frac{1}{2\pi} \int_{-\pi}^{\pi} \frac{1}{\overline{\hat{u}(\xi)}} \theta(\xi) e^{it\xi}  e^{-in\xi} d\xi
\end{equation*}
is the $n$-th coefficient of the Fourier series of $p$-times continuously differentiable function ${ \theta(\xi) e^{it\xi} }/{\overline{\hat{u}(\xi)}}$ vanishing at $\pm \pi$, on $[-\pi,\pi]$, we have
\begin{eqnarray*}
	\int_{-\pi}^{\pi}  (\frac{1}{\overline{\hat{u}(\xi)}}\theta(\xi)e^{it\xi} )^{(p)} e^{-in\xi} d\xi 
	= - (-in)^p \int_{-\pi}^{\pi}\frac{1}{\overline{\hat{u}(\xi)}}\theta(\xi)e^{it\xi} e^{-in\xi} d\xi 
	=  - 2\pi(-in)^p \tilde s(t-n)
\end{eqnarray*}
which implies \eqref{v2.4sec4lemeq1}.
\end{proof}

\begin{theorem}
Let $X(t)$ be a wide sense stationary process band-limited to $[-\omega, \omega] \subsetneq [-\pi,\pi]$ and $R_X(t)$ the autocovariance function of $X(t)$. Assume that ASE \eqref{v2.6sec4eq2} holds.
Then we have for any $t \in \mathbb R$ 
\begin{equation}\label{v2.8sec4eq5}
	X(t) = \sum_{n \in \mathbb Z} \langle X( \cdot), u(\cdot-n) \rangle \tilde{s}(t-n)
\end{equation}
which converges in mean square where $\tilde s(t)$ is given by \eqref{v2.8sec4eq3}.
Morever, if $\theta(\xi) $ in \eqref{v2.8sec4eq3} belongs to $ C^p (\mathbb R)$ for some integer $p>1$ then
\begin{equation*}
	E|X(t) - X_N (t) |^2 \leq \frac{4 R_X(0) C_p(t)^2 }{(p-1)^2 N^{2(p-1)}}
\end{equation*}
where 
\begin{equation*}
	X_N(t):= \sum_{|n| \leq N} \langle X(\cdot), u(\cdot -n) \rangle \tilde s(t-n)
\end{equation*}
and $C_p (t)$ is given by \eqref{v3.4sec4eq2}.
\end{theorem}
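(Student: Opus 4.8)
The plan is to split the statement into its two assertions and handle them in sequence. For the mean–square convergence of \eqref{v2.8sec4eq5}, the key observation is that $R_X(t'-\cdot) \in PW_\omega \subsetneq PW_\pi$ for every fixed $t' \in \mathbb R$, so by \eqref{v2.8sec4eq2} applied to $f = R_X(t'-\cdot)$ we get
\begin{equation*}
	R_X(t'-t) = \sum_{n \in \mathbb Z} \langle R_X(t'-\cdot), u(\cdot-n) \rangle \tilde s(t-n),
\end{equation*}
and since $\hat{\tilde s} = \hat s \,\theta$ with $\theta$ smooth and compactly supported in $[-\pi,\pi]$, this series converges absolutely and uniformly on $\mathbb R$ (for instance because $\tilde s \in PW_\pi$ and the Fourier coefficients of the $C^p$ function $\theta(\xi)e^{it'\xi}/\overline{\hat u(\xi)}$ are summable). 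With this uniform absolute convergence in hand, the argument is then \emph{verbatim} the proof of Theorem \ref{v1.0thm2.2}: expand $E|X(t) - X_N(t)|^2$ as in \eqref{v1.0eq2}, use Fubini, recognize the double sum as an iterated sum which — by unconditional/absolute convergence — collapses to $\overline{R_X(0)}$, and conclude that the error tends to $0$. So I would simply say ``the proof of mean–square convergence is identical to that of Theorem \ref{v1.0thm2.2} with $r_n$ replaced by $\tilde s(\cdot-n)$, using \eqref{v2.8sec4eq2} in place of \eqref{v1.0eq0.1}.''

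For the quantitative truncation bound, I would first write, using mean–square convergence,
\begin{equation*}
	E|X(t) - X_N(t)|^2 = \lim_{M \to \infty} E\Big| \sum_{N < |n| \leq M} \langle X,u(\cdot-n)\rangle \tilde s(t-n) \Big|^2,
\end{equation*}
and expand the square into a double sum $\sum_{N<|k|,|n|\le M} \langle R_X(\cdot-y), u(\cdot-k)\rangle$-type terms times $\tilde s(t-k)\overline{\tilde s(t-n)}$, exactly as in \eqref{v1.7theproof}. Rather than tracking the double sum directly, the cleaner route is: the tail of the series is itself the reconstruction of the \emph{tail part} of $R_X$. Concretely, set $g_t(\cdot) := R_X(\cdot - t) \in PW_\omega$; then $\langle X, u(\cdot-n)\rangle = \overline{\langle g_0, u(\cdot - n)\rangle}$-type identities let one write the double sum as $\sum_{N<|n|\le M} \overline{c_n}\,\overline{\tilde s(t-n)}$ where $c_n := \langle R_X(\cdot-t), u(\cdot-n)\rangle$ and $\sum_n c_n \tilde s(t-n) = R_X(0)$. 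Bounding crudely: $E|X(t)-X_N(t)|^2 \le \big(\sum_{|n|>N} |c_n|\cdot|\tilde s(t-n)|\big)^2$ is not quite it; the honest estimate is $E|X(t)-X_N(t)|^2 \le R_X(0) \sum_{|n|>N} |\tilde s(t-n)| \cdot$ (something), so I expect one factor of $\sqrt{R_X(0)}$ from $|\langle X, u(\cdot-n)\rangle| = |\langle R_X(\cdot - \text{pt}), u(\cdot-n)\rangle|$-bounds via Cauchy–Schwarz and $\|u(\cdot-n)\|_{L^1}=1$, $|R_X| \le R_X(0)$, giving $E|X(t)-X_N(t)|^2 \le \big(\sum_{|n|>N}\sqrt{R_X(0)}\,|\tilde s(t-n)|\big)^2$. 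Wait — that already double-counts; the correct shape, matching the claimed constant $4 R_X(0) C_p(t)^2/((p-1)^2 N^{2(p-1)})$, is obtained by writing the tail as $\sum_{|n|>N} a_n \tilde s(t-n)$ with $|a_n| \le R_X(0)^{1/2}$ for the stochastic coefficients after one Cauchy–Schwarz, and the $4$ coming from $\big(\sum_{n>N} + \sum_{n<-N}\big)^2 \le 2(\cdots)^2 + 2(\cdots)^2$ together with the factor-$2$ slack in bounding $\sum_{n>N} n^{-p} \le \int_N^\infty x^{-p}dx = N^{-(p-1)}/(p-1)$.

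Then I would invoke Lemma \ref{v2.3sec4lemma1}: $|\tilde s(t-n)| \le C_p(t)/|n|^p$, so
\begin{equation*}
	\sum_{|n|>N} |\tilde s(t-n)| \le 2 C_p(t) \sum_{n > N} \frac{1}{n^p} \le 2 C_p(t) \int_N^\infty \frac{dx}{x^p} = \frac{2 C_p(t)}{(p-1) N^{p-1}},
\end{equation*}
and squaring, after the Cauchy–Schwarz step that pulls out $R_X(0)$, yields $4 R_X(0) C_p(t)^2 / ((p-1)^2 N^{2(p-1)})$. The main obstacle I anticipate is the bookkeeping in the stochastic double sum: making precise that $E\big|\sum_{|n|>N}\langle X, u(\cdot-n)\rangle \tilde s(t-n)\big|^2$ is bounded by $R_X(0)\big(\sum_{|n|>N}|\tilde s(t-n)|\big)^2$. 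This follows because the double sum equals $\sum_{|n|>N}\overline{\tilde s(t-n)}\big\langle \sum_{|k|>N}\tilde s(t-k)R_X(\cdot-y)\text{-kernel}, u(\cdot-n)\big\rangle$, and the inner average of $R_X$ against the unit-mass nonnegative $u(\cdot-n)$ is bounded in modulus by $\sup|R_X| = R_X(0)$ times $\sum_{|k|>N}|\tilde s(t-k)|$; combining the two tail sums and invoking Lemma \ref{v2.3sec4lemma1} closes it. Everything else — Fubini justification, interchanging limit and sum — is routine given the absolute convergence already established for the full series in the first part.
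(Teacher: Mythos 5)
Your proposal is correct and follows essentially the same route as the paper: part one reduces \eqref{v2.8sec4eq5} to the deterministic expansion \eqref{v2.8sec4eq2} applied to $R_X(t'-\cdot)\in PW_\omega$ and then repeats the argument of Theorem \ref{v1.0thm2.2}, and part two expands $E|X(t)-X_N(t)|^2$ as the tail double sum, bounds the inner averages of $R_X$ against the nonnegative unit-mass windows by $R_X(0)$ to get $R_X(0)\big(\sum_{|n|>N}|\tilde s(t-n)|\big)^2$, and closes with Lemma \ref{v2.3sec4lemma1} and the integral test, the factor $4$ arising simply from squaring the $2$ in $\sum_{|n|>N}|\tilde s(t-n)|\le 2C_p(t)/((p-1)N^{p-1})$. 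The detours in your write-up (the Cauchy--Schwarz/$\sqrt{R_X(0)}$ aside and the $2(\cdots)^2+2(\cdots)^2$ remark) are unnecessary but do not affect the final, correct estimate.
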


\begin{proof}
Since $R_X(t'-\cdot) \in PW_\omega$ for a given $t' \in \mathbb R$, it follows by \eqref{v2.8sec4eq2} that
	\begin{equation} \label{v2.8sec4eq4}
		R_X (t'-t)  = \sum_{n \in \mathbb Z } \langle R_X (t'-\cdot), u (\cdot-n) \rangle \tilde s(t-n), ~t \in \mathbb R
	\end{equation}
which converges in $L^2 (\mathbb R)$ and absolutely and uniformly on $\mathbb R$. Using \eqref{v2.8sec4eq4} we can obtain \eqref{v2.8sec4eq5} by the same argument as in the proof of Theorem \ref{v1.0thm2.2}. 

For any $t \in \mathbb R$ it follows that
\begin{eqnarray*}
	E|X(t) - X_N (t) |^2 
	&=& E\Big| \sum_{|n|>N} \langle X, u_n \rangle \tilde s_n (t)  \Big|^2 \\
	&=& \sum_{|n|>N} \int_{n-a}^{n+b} \Big( \sum_{|k|>N} \int_{k-a}^{k+b} R_X(x-y)u_k(y)dy \tilde s_k (t)  \Big) u_n (x)dx \, \overline{\tilde s_n (t)} \\
	& \leq &  \sum_{|n|>N} \int_{n-a}^{n+b} \left( \sum_{|k|>N} \Big| \int_{k-a}^{k+b} R_X(x-y)u_k(y)dy\Big| \Big| \tilde s_k (t)\Big|  \right) u_n (x)dx \, \overline{\tilde s_n (t)}\\
	& \leq & \sup_{-(a+b) \leq t \leq a+b} |R_X (t)|   \sum_{|n|>N} \int_{n-a}^{n+b} \Big( \sum_{|k|>N} | \tilde s_k (t)|  \Big) u_n (x)dx \, \overline{\tilde s_n (t)} \\
	& \leq &R_X (0)   \Big( \sum_{|k|>N} | \tilde s_k (t)|  \Big)^2
\end{eqnarray*}
in which $\tilde s_k(t) = \tilde s(t-k)$ and $u_k (t) = u(t-k)$ for $k \in \mathbb Z$.
By Lemma \ref{v2.3sec4lemma1} combined with the integral test, 
\begin{eqnarray*}
	 \sum_{|k|>N} | \tilde s(t-k)|  \leq 
	  \sum_{ k > N} \frac{2 C_p(t)}{k^p}
	 \leq   \frac{2 C_p(t)}{(p-1)N^{p-1}}
\end{eqnarray*}
which proves the theorem.	
\end{proof}

As already mentioned, $\{\mathcal{P}u(t-n) : n \in \mathbb Z\}$ and $\{ s(t-n) : n \in \mathbb Z\}$ are dual Riesz basis pairs of $PW_\pi$. Thus the orthogonal projection $\mathcal{P}$ of $L^2(\mathbb R)$ onto $PW_\pi$ can be written as 
\begin{equation*}
	\mathcal{P}f (t) = \sum_{n \in \mathbb Z} \langle f(\cdot), \mathcal{P}u(\cdot-n) \rangle s(t-n).
\end{equation*}
To consider an aliasing error of ASE \eqref{v3.3sec3eq1}, we extend $\mathcal{P}$ for stochastic processes as
\begin{equation*} \label{v2.6sec4proj}
	{\mathcal{P}}X(t) := \sum_{n \in \mathbb Z} \langle X(\cdot), \mathcal{P}u(\cdot-n) \rangle s(t-n).
\end{equation*}
Assuming that ASE \eqref{v3.3sec3eq1} holds for wide sense stationary processes band-limited to $[-\pi,\pi]$, one can easily see that 
for any $t \in \mathbb R$ 
\begin{equation*}
	E|X(t) - {\mathcal{P}}X(t) |^2 = \int_{|\lambda| > \pi} F(d\lambda)
\end{equation*}
where $F$ is the spectral distribution function of $X(t)$.

\section*{Acknowledgement}
Sinuk Kang is partially supported by Erasmus Mundus BEAM program funded by European Commission.

%

\begin{thebibliography}{31}
\expandafter\ifx\csname natexlab\endcsname\relax\def\natexlab#1{#1}\fi
\providecommand{\bibinfo}[2]{#2}
\ifx\xfnm\relax \def\xfnm[#1]{\unskip,\space#1}\fi
\bibitem[{Shannon(1949)}]{Shannon:1949uo}
\bibinfo{author}{C.~E. Shannon},
\newblock \bibinfo{title}{{Communication in the presence of noise}},
\newblock \bibinfo{journal}{Proc. IRE} \bibinfo{volume}{37}
  (\bibinfo{year}{1949}) \bibinfo{pages}{10--21}.
\bibitem[{Whittaker(1915)}]{Whittaker:1915wb}
\bibinfo{author}{E.~T. Whittaker},
\newblock \bibinfo{title}{{On the functions which are represented by the
  expansions of the interpolation theory}},
\newblock \bibinfo{journal}{Proc. Royal. Soc. Edinburgh} \bibinfo{volume}{35}
  (\bibinfo{year}{1915}) \bibinfo{pages}{181--194}.
\bibitem[{Kotelnikov(1933)}]{Kotelnikov:1933vx}
\bibinfo{author}{V.~A. Kotelnikov},
\newblock \bibinfo{title}{{On the transmission capacity of the `ether' and of
  cables in electrical communications}},
\newblock \bibinfo{journal}{Izd. Red. Upr. Svyazzi RKKA}
  (\bibinfo{year}{1933}).
\bibitem[{Balakrishnan(1957)}]{Balakrishnan:1957gx}
\bibinfo{author}{A.~Balakrishnan},
\newblock \bibinfo{title}{{A note on the sampling principle for continuous
  signals}},
\newblock \bibinfo{journal}{IEEE T. Infor. Theory} \bibinfo{volume}{3}
  (\bibinfo{year}{1957}) \bibinfo{pages}{143--146}.
\bibitem[{Lloyd(1959)}]{Lloyd:1959vh}
\bibinfo{author}{S.~P. Lloyd},
\newblock \bibinfo{title}{{A sampling theorem for stationary (wide sense)
  stochastic processes}},
\newblock \bibinfo{journal}{Trans. Amer. Math. Soc.} \bibinfo{volume}{92}
  (\bibinfo{year}{1959}) \bibinfo{pages}{1--12}.
\bibitem[{Beutler(1961)}]{Beutler:1961wk}
\bibinfo{author}{F.~J. Beutler},
\newblock \bibinfo{title}{{Sampling theorems and bases in a Hilbert space}},
\newblock \bibinfo{journal}{Inform. Control.} \bibinfo{volume}{4}
  (\bibinfo{year}{1961}) \bibinfo{pages}{97--117}.
\bibitem[{Zakai(1965)}]{ZAKAI:1965ha}
\bibinfo{author}{M.~Zakai},
\newblock \bibinfo{title}{{Band-limited functions and the sampling theorem}},
\newblock \bibinfo{journal}{Inform. Control.} \bibinfo{volume}{8}
  (\bibinfo{year}{1965}) \bibinfo{pages}{143--158}.
\bibitem[{Lee(1976)}]{Lee:1976uw}
\bibinfo{author}{A.~J. Lee},
\newblock \bibinfo{title}{{On band limited stochastic processes}},
\newblock \bibinfo{journal}{SIAM J. Appl. Math.} \bibinfo{volume}{30}
  (\bibinfo{year}{1976}) \bibinfo{pages}{269--277}.
\bibitem[{Belyaev(1959)}]{Belyaev:1959iv}
\bibinfo{author}{Y.~K. Belyaev},
\newblock \bibinfo{title}{{Analytic random processes}},
\newblock \bibinfo{journal}{Theory Probab. Appl.} \bibinfo{volume}{4}
  (\bibinfo{year}{1959}) \bibinfo{pages}{402--409}.
\bibitem[{Piranashvili(1967)}]{Piranashvili:1967bi}
\bibinfo{author}{Z.~A. Piranashvili},
\newblock \bibinfo{title}{{On the problem of interpolation of random
  processes}},
\newblock \bibinfo{journal}{Theory Probab. Appl.} \bibinfo{volume}{12}
  (\bibinfo{year}{1967}) \bibinfo{pages}{647--657}.
\bibitem[{Song et~al.(2007)Song, Sun, Zhou, and Hou}]{Song:2007fu}
\bibinfo{author}{Z.~Song}, \bibinfo{author}{W.~Sun}, \bibinfo{author}{X.~Zhou},
  \bibinfo{author}{Z.~Hou},
\newblock \bibinfo{title}{{An average sampling theorem for bandlimited
  stochastic processes}},
\newblock \bibinfo{journal}{IEEE T. Infor. Theory} \bibinfo{volume}{53}
  (\bibinfo{year}{2007}) \bibinfo{pages}{4798--4800}.
\bibitem[{He and Song(2011)}]{He:2011ep}
\bibinfo{author}{G.~He}, \bibinfo{author}{Z.~Song},
\newblock \bibinfo{title}{{Approximation of WKS sampling theorem on random
  signals}},
\newblock \bibinfo{journal}{Numer. Func. Anal. Opt.} \bibinfo{volume}{32}
  (\bibinfo{year}{2011}) \bibinfo{pages}{397--408}.
\bibitem[{Olenko and Pog{\'a}ny(2011)}]{Olenko:2011db}
\bibinfo{author}{A.~Olenko}, \bibinfo{author}{T.~Pog{\'a}ny},
\newblock \bibinfo{title}{{Average sampling restoration of harmonizable
  processes}},
\newblock \bibinfo{journal}{Communications in Statistics - Theory and Methods}
  \bibinfo{volume}{40} (\bibinfo{year}{2011}) \bibinfo{pages}{3587--3598}.
\bibitem[{Gr{\"o}chenig(1992)}]{Grochenig:1992ht}
\bibinfo{author}{K.~Gr{\"o}chenig},
\newblock \bibinfo{title}{{Reconstruction algorithms in irregular sampling}},
\newblock \bibinfo{journal}{Math. Comp.} \bibinfo{volume}{59}
  (\bibinfo{year}{1992}) \bibinfo{pages}{181--194}.
\bibitem[{Aldroubi(2002)}]{Aldroubi:2002cz}
\bibinfo{author}{A.~Aldroubi},
\newblock \bibinfo{title}{{Non-uniform weighted average sampling and
  reconstruction in shift-invariant and wavelet spaces}},
\newblock \bibinfo{journal}{Appl. Comput. Harmon. Anal.} \bibinfo{volume}{13}
  (\bibinfo{year}{2002}) \bibinfo{pages}{151--161}.
\bibitem[{Feichtinger and Gr{\"o}chenig(1994)}]{Feichtinger:1994td}
\bibinfo{author}{H.~G. Feichtinger}, \bibinfo{author}{K.~Gr{\"o}chenig},
\newblock \bibinfo{title}{{Theory and practice of irregular sampling}},
\newblock in: \bibinfo{booktitle}{Stud. Adv. Math.}, \bibinfo{publisher}{CRC},
  \bibinfo{address}{Boca Raton, FL}, \bibinfo{year}{1994}, pp.
  \bibinfo{pages}{305--363}.
\bibitem[{Sun and Zhou(2002)}]{Sun:2002eq}
\bibinfo{author}{W.~Sun}, \bibinfo{author}{X.~Zhou},
\newblock \bibinfo{title}{{Reconstruction of band-limited signals from local
  averages}},
\newblock \bibinfo{journal}{IEEE T. Infor. Theory} \bibinfo{volume}{48}
  (\bibinfo{year}{2002}) \bibinfo{pages}{2955--2963}.
\bibitem[{Helms and Thomas(1962)}]{Helms:1962wo}
\bibinfo{author}{H.~D. Helms}, \bibinfo{author}{J.~B. Thomas},
\newblock \bibinfo{title}{{Truncation error of sampling-theorem expansions}},
\newblock \bibinfo{journal}{Proc. IRE} \bibinfo{volume}{50}
  (\bibinfo{year}{1962}) \bibinfo{pages}{179--184}.
\bibitem[{Doob(1990)}]{Doob:1990us}
\bibinfo{author}{J.~L. Doob}, \bibinfo{title}{{Stochastic processes}}, Wiley
  Classics Library, \bibinfo{publisher}{Wiley-Interscience},
  \bibinfo{address}{New York}, \bibinfo{year}{1990}.
\bibitem[{Cramer(1940)}]{Cramer:1940vr}
\bibinfo{author}{H.~Cramer},
\newblock \bibinfo{title}{{On the theory of stationary random processes}},
\newblock \bibinfo{journal}{Ann. of Math. (2)} \bibinfo{volume}{41}
  (\bibinfo{year}{1940}) \bibinfo{pages}{215--230}.
\bibitem[{Christensen(2008)}]{Christensen:2008wy}
\bibinfo{author}{O.~Christensen}, \bibinfo{title}{{Frames and bases}}, an
  introductory course, \bibinfo{publisher}{Birkhauser}, \bibinfo{year}{2008}.
\bibitem[{Higgins(1996)}]{Higgins:1996uoa}
\bibinfo{author}{J.~R. Higgins}, \bibinfo{title}{{Sampling theory in Fourier
  and signal analysis}}, Volume 1: Foundations, \bibinfo{publisher}{Oxford
  University Press, USA}, \bibinfo{year}{1996}.
\bibitem[{Rudin(1964)}]{Rudin:1964wf}
\bibinfo{author}{W.~Rudin}, \bibinfo{title}{{Principles of mathematical
  analysis}}, Second edition, \bibinfo{publisher}{McGraw-Hill Book Co.},
  \bibinfo{address}{New York}, \bibinfo{year}{1964}.
\bibitem[{Heil(2011)}]{Heil:2011jr}
\bibinfo{author}{C.~Heil}, \bibinfo{title}{{A basis theory primer}}, Applied
  and Numerical Harmonic Analysis, \bibinfo{publisher}{Birkh\"auser/Springer,
  New York}, \bibinfo{edition}{expanded} edition, \bibinfo{year}{2011}.
\bibitem[{Boche and M{\"o}nich(2010)}]{Boche:2010dv}
\bibinfo{author}{H.~Boche}, \bibinfo{author}{U.~J. M{\"o}nich},
\newblock \bibinfo{title}{{Approximation of wide-sense stationary stochastic
  processes by Shannon sampling series}},
\newblock \bibinfo{journal}{IEEE T. Infor. Theory} \bibinfo{volume}{56}
  (\bibinfo{year}{2010}) \bibinfo{pages}{6459--6469}.
\bibitem[{Kang and Kwon(2011)}]{Kang:2011bf}
\bibinfo{author}{S.~Kang}, \bibinfo{author}{K.~H. Kwon},
\newblock \bibinfo{title}{{Generalized average sampling in shift invariant
  spaces}},
\newblock \bibinfo{journal}{J. Math. Anal. Appl.} \bibinfo{volume}{377}
  (\bibinfo{year}{2011}) \bibinfo{pages}{70--78}.
\bibitem[{Janssen(1988)}]{JANSSEN:1988ul}
\bibinfo{author}{A.~Janssen},
\newblock \bibinfo{title}{{The Zak transform - a signal transform for sampled
  time-continuous signals}},
\newblock \bibinfo{journal}{Philips J. Res.} \bibinfo{volume}{43}
  (\bibinfo{year}{1988}) \bibinfo{pages}{23--69}.
\bibitem[{Garc{\'\i}a et~al.(2005)Garc{\'\i}a, P{\'e}rez-Villal{\'o}n, and
  Portal}]{Garcia:2005fz}
\bibinfo{author}{A.~G. Garc{\'\i}a},
  \bibinfo{author}{G.~P{\'e}rez-Villal{\'o}n}, \bibinfo{author}{A.~Portal},
\newblock \bibinfo{title}{{Riesz bases in L-2(0,1) related to sampling in
  shift-invariant spaces}},
\newblock \bibinfo{journal}{J. Math. Anal. Appl.} \bibinfo{volume}{308}
  (\bibinfo{year}{2005}) \bibinfo{pages}{703--713}.
\bibitem[{Yao and Thomas(1966)}]{Yao:1966hf}
\bibinfo{author}{K.~Yao}, \bibinfo{author}{J.~B. Thomas},
\newblock \bibinfo{title}{{On truncation error bounds for sampling
  representations of band-limited signals}},
\newblock \bibinfo{journal}{IEEE T. Aero. Elec. Sys.} \bibinfo{volume}{6}
  (\bibinfo{year}{1966}) \bibinfo{pages}{640--647}.
\bibitem[{Brown~Jr(1969)}]{BrownJr:1969wr}
\bibinfo{author}{J.~L. Brown~Jr},
\newblock \bibinfo{title}{{Bounds for truncation error in sampling expansions
  of band-limited signals}},
\newblock \bibinfo{journal}{IEEE T. Infor. Theory} \bibinfo{volume}{15}
  (\bibinfo{year}{1969}) \bibinfo{pages}{440--444}.
\bibitem[{Brown~Jr(1968)}]{BrownJr:1968uh}
\bibinfo{author}{J.~L. Brown~Jr},
\newblock \bibinfo{title}{{Truncation error for band-limited random
  processes}},
\newblock \bibinfo{journal}{Information Sci.} \bibinfo{volume}{1}
  (\bibinfo{year}{1968}) \bibinfo{pages}{261--271}.

\end{thebibliography}

\end{document}